\newcommand{\R}{{\bf R}}
\newcommand{\ab}{\allowbreak}
\newcommand{\ignore}[1]{}
\newcommand{\ord}[1]{\mathcal{O}\!\left(#1\right)}
\newcommand{\out}[1]{}
\def\eu{\,\hbox{\raise .36em\hbox to0pt{\vrule height0.5pt%
width.55em depth0pt\hss}%
\raise .25em\hbox to0pt{\vrule height0.5pt width.5em%
depth0pt\hss}\hskip.02em\sf C}\,}
\begin{document}
\pagenumbering{arabic} \thispagestyle{empty}
 \title{Aggregation Languages for Moving Object and Places of Interest Data}

\author{Leticia G\'omez \inst{1,3} \and Bart Kuijpers \inst{2} \and  Alejandro Vaisman \inst{3}}
\mainmatter

\institute{Instituto Tecnol\'ogico de Buenos Aires\\
\email{lgomez@itba.edu.ar} \and
Hasselt University and Transnational University of Limburg, Belgium\\
\email{bart.kuijpers@uhasselt.be} \and Universidad de Buenos
Aires\\
\email{avaisman@dc.uba.ar}}

\maketitle

\begin{abstract}
 We address aggregate queries over GIS data and moving object data,
 where non-spatial data are stored in a data warehouse.
We propose a formal data model and query language to express complex aggregate
queries. Next, we study the compression of trajectory data, produced by moving objects,
using the notions of stops and moves. We show that stops and moves are expressible
in our query language and we consider a fragment of this language, consisting of
 regular expressions to talk about temporally ordered sequences of stops and moves.
 This fragment can be used to efficiently
 express data mining and pattern matching tasks over
trajectory data.
\end{abstract}



\section{Introduction}
\label{sec:intro}

Geographic Information Systems (GIS) have been
extensively used in various application domains, ranging from
economical, ecological and demographic analysis, to city and route
planning~\cite{Rigaux02,worboys}.
In recent years, \emph{time} is playing an increasingly important role in
in GIS and spatial data management~\cite{timeingis}.
  One particular line of research in this direction, introduced by
  Wolfson~\cite{Guting00,Guting05,Meratnia02,Wolfson99,Wolfson98,Vazirgiannis01},
   concerns \emph{moving object data}.
Moving objects, carrying location-aware devices, produce trajectory
data in the form of a sample of $(O_{id},t,x,y)$-tuples, that
 contain object identifier and time-space information.

In this paper, we are interested in \emph{aggregate queries}
over GIS data and moving object data.
Typically, when aggregation  becomes important, it is
advisable to organize the non-spatial data in a GIS in a data warehouse.
In a data warehouse, numerical data are stored in fact tables
built along several dimensions. For instance, if we are interested in
the sales of certain products in stores in a given region, we may
consider the sales amounts in a fact table over the three dimensions
store, time and product.  In general dimensions are organized into
 aggregation hierarchies. For example, stores can aggregate over
  cities which in turn can aggregate into regions and countries.
  Each of these aggregation levels can also hold descriptive attributes
  like city population, the area of a region, etc.
For traditional alpha-numeric data,
  OLAP (On Line Analytical Processing)~\cite{Kimball02}
comprises a set of tools  and algorithms that allow efficiently
querying multidimensional
    databases, containing large amounts of data, usually called
    data warehouses.

 Two of the present authors have proposed in previous work a model
 for smoothly integrating  the GIS and OLAP  worlds. This model was
 implemented using open source software~\cite{Haesevoets06}.
 The same authors also proposed a taxonomy of aggregation queries on moving
 object data~\cite{Kuijpers07}.
 In this paper, we propose a conceptual model and a formal query language
 that cover the different types of aggregation queries discussed in the above
  mentioned taxonomy (see Sections~\ref{sec:DataModel} and~\ref{sec:FOlang}).
 At the basis of our aggregation query language is a multi-sorted first-order
 query language ${\cal L}_{mo}$ for moving object and GIS data in
 which one can specify properties of moving objects, geometric
 elements of GIS layers and OLAP data storing the non-spatial GIS data.

 Recently, in the study of moving object data, the concepts
  of \emph{stops} and \emph{moves} were introduced~\cite{Mouza05,Damiani07}.
 These concepts serve to compress the trajectory data that
 is produced by moving objects using application dependent
  places of interest.  A designer may want to select a set
   of places of interest that are relevant to her application. For
   instance, in a tourist application, such places can be hotels,
   museums and churches. In a traffic control application, they may
   be road segments, traffic lights and junctions. We assume that these
   places of interest are stored in a specific GIS layer. If a moving
   object spends a sufficient amount of time in a place of interest,
   this place is considered a stop of the object's trajectory. In
   between stops, the trajectory has moves. Thus, we can replace a
 raw trajectory given by $(O_{id},t,x,y)$-tuples by a sequence of
 application-relevant stops and moves.
 In this paper, we give a geometric definition of stops and
 moves and show that they are computable (see Section~\ref{sec:trajectories}).
We also show that this compression can be expressed in the language
   ${\cal L}_{mo}$ and we sketch a sublanguage of  ${\cal L}_{mo}$ that
    allows us to talk about temporally ordered sequences of stops and
     moves (see Section~\ref{sec:stopsmoves}). The syntax of this
     languages is given in the form of regular expressions
     (see Section~\ref{sec:language}). We show that this language
      considerably  extends the language proposed by Mouza and
       Rigaux~\cite{Mouza05}, and can be used
       to efficiently express data mining and pattern matching tasks over
trajectory data.

  \subsection{Running Example}

 Now, let us introduce the example we will be using throughout the
paper. Figure \ref{fig:paris}  (left) shows a simplified map of Paris,
containing two hotels, denoted Hotel 1 and Hotel 2 (H1 and H2 from
here on), the Louvre and the Eiffel tower. We consider three
moving objects, O1, O2 and O3. Object O1 goes from H1 to the
Louvre, the Eiffel tower, spends just a few minutes there,  and
returns to the hotel.
 Object O2 goes from H1 to the Louvre, the Eiffel tower
 (it stays a couple of hours in each place),  and returns to the hotel.
 Object O3 leaves H2 to the Eiffel tower, visits the place, and
 returns to H2. Figure~\ref{fig:paris}  shows an example of these
 trajectory samples on the right.

 In this scenario, a GIS user may be interested in finding out useful trajectory
 information in this setting, like ``number of persons
  going from H1 to the Louvre and the
 Eiffel tower (visiting both places) in the same day'',
  or  ``number of persons going from a hotel in the left bank of the
 Seine, to the Louvre in the mornings''.

\begin{figure}[t]
\centerline{
\begin{tabular}{l c r}
   \raisebox{-2cm}{\psfig{figure=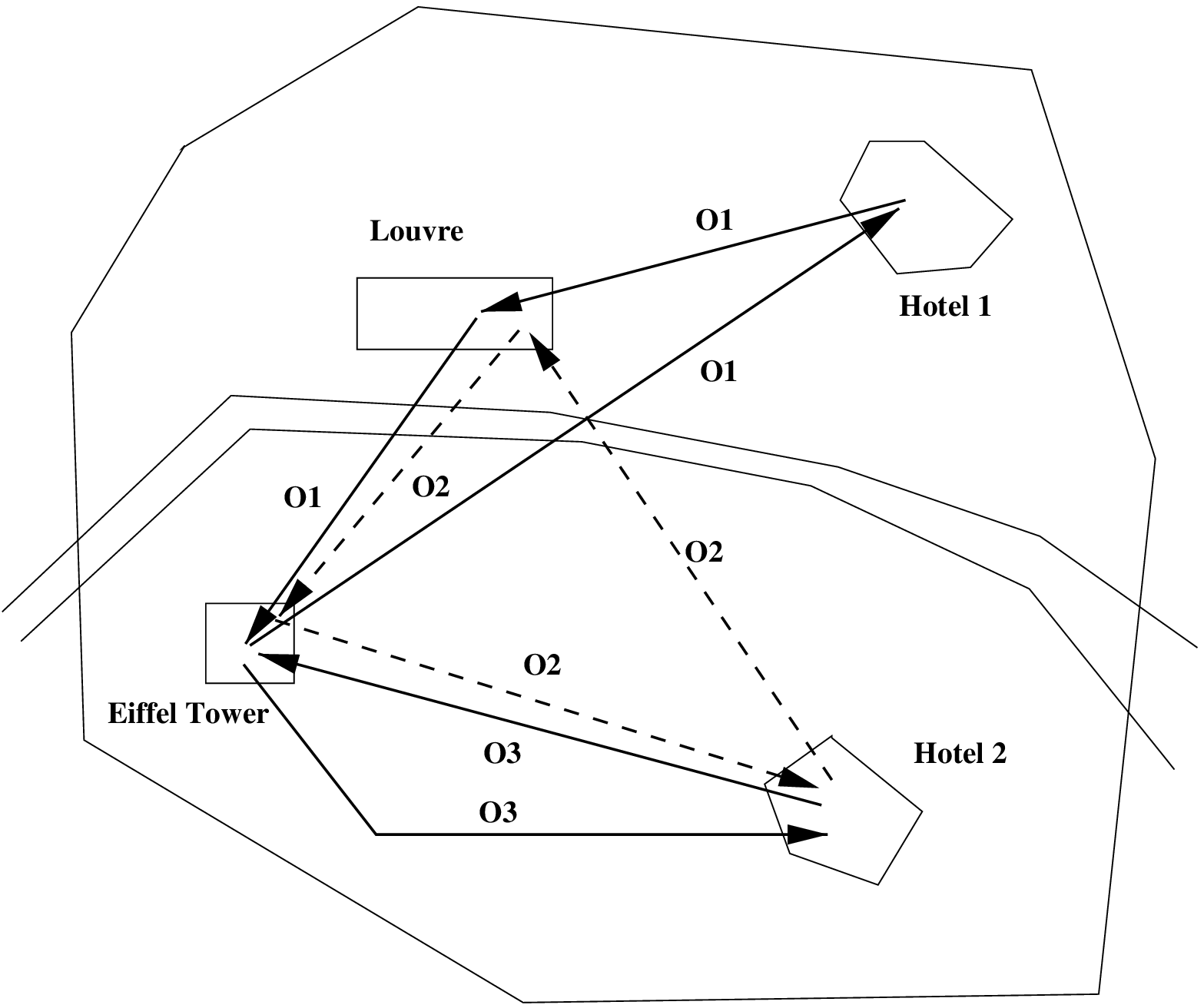,width=2in,height=1.7in}}
   &\phantom{joske}&{\scriptsize \begin{tabular}{|c|c|c|c|}
        \hline
      $O_{id}$ & $t$  & $x$ & $y$\\\hline
   $O_1$ & 1 & $x_1$ & $y_1$  \\
   $O_1$ & 2 & $x_2$ & $y_2$   \\
    $O_1$ & 1 & $x_3$ & $y_3$   \\
      $O_1$ & 4 &$x_4$ & $y_4$   \\
       $O_2$ & 2 & $x_5$ & $y_5$   \\
        $O_2$ & 3 & $x_6$ & $y_6$   \\
         $O_2$ & 4 & $x_7$ & $y_7$  \\
          $O_3$ & 5 &$x_8$ & $y_8$   \\
           $O_4$ & 6 & $x_9$ & $y_9$  \\
            $O_5$ & 3 & $x_{10}$ & $y_{10}$  \\
             $O_6$ & 2 & $x_{11}$ & $y_{11}$   \\
              $O_6$ & 3 & $x_{12}$ & $y_{12}$   \  \\ \hline
        \end{tabular}}
        \end{tabular}
}
\caption{Running example (left) and  a moving object  fact table (right)} \label{fig:paris}
\end{figure}

\subsection{Related Work}
\label{sec:related}



 \paragraph{GIS and OLAP Interaction}
   Although some authors have pointed out the
   benefits
   of combining GIS and OLAP, not much work has been done in this
    field. Vega L\'opez {\em et al}~\cite{VegaLopez05} present a
    comprehensive survey on spatiotemporal  aggregation that
    includes a section on spatial aggregation.
         Rivest {\em et
     al.}~\cite{Rivest01} introduce the concept of SOLAP (standing for
     Spatial OLAP), and describe the desirable features and operators  a SOLAP
   system should have.
    Han {\em et al.}~\cite{Han98} used OLAP techniques for materializing
    selected spatial objects, and proposed a so-called {\em
    Spatial Data Cube}.  This model only supports
    aggregation of such  spatial objects.


  \paragraph{Moving Objects}

Many efforts have been made in the field of moving objects
databases, specially regarding data modeling an indexing. G\"uting
and Schneider~\cite{Guting05} provide a good reference to this
large  corpus of work.
 G\"uting {\em et al}
proposed a system of abstract data types as extensions to DBMSs to
support time-dependant geometries~\cite{Guting00}.
  Hornsby and
 Egenhofer~\cite{Hornsby02}
 introduced a  framework for modeling moving
 objects, that supports viewing objects at different
 granularities, depending on the sampling time interval.
 The possible positions of an object between two observation is
  estimated to be within two inverted half-cones that conform a
  {\em lifeline bead}, whose projection over the x-y plane is an
  ellipse.
 Another approach to moving objects studies moving objects
  on networks, basically represented as
graphs.
 Van de
 Weghe {\em et al} proposed a qualitative trajectory calculus for
objects in a GIS~\cite{Vandeweghe05}, based on the assumption that
in a GIS scenario, qualitative information is necessary (and, in
general, more useful than quantitative information).


Aggregate information is still  quite an open field, either in GIS
or in a moving objects scenario. Meratnia and de
By~\cite{Meratnia02} have tackled the topic of aggregation of
trajectories, identifying similar trajectories and merging them in
a single one, by dividing the area of study into homogeneous {\em
spatial units}.
Papadias {\em et al}~\cite{Papadias02b} index historical aggregate
information about moving objects. They aim at building a
spatio-temporal data warehouse

Regarding the addition of semantics to trajectories,  Brakatsoulas
\emph{et al}\cite{Brakatsoulas04}, in the context of trajectory
mining in road networks, propose to enrich  trajectories of moving
objects with information about the relationships between
trajectories (e.g., \emph{intersect}, \emph{meets}), and between a
trajectory and the GIS environment (s\emph{tay within},
\emph{bypass}, \emph{leave}). Extending this notion, Damiani
\emph{et al}~\cite{Damiani07} introduced the concept of stops and
moves, in order  to enrich trajectories with semantically
annotated data. With a similar idea,
 \cite{Mouza05}
propose a model where trajectories are represented by a sequence
of moves. They propose a query language based on regular
expressions, aimed at obtaining so-called mobility patterns.
 However, this language is only geared towards trajectory data,
 and does not relate trajectories with the GIS
  environment. Thus, the classes of queries addressed is limited.
 Moreover, aggregation is not considered in this language.

    We can conclude that, although  the efforts above
address particular problems,
    integrating spatial and warehousing information in a single
  framework is still in its infancy.

\section{A Data Model for Moving Objects}
\label{sec:DataModel}

 Our work is based on the data model
 introduced in \cite{Haesevoets06,Kuijpers07}. In this section we
  give an overview of this model.
  We first present the model for spatial data,
 and then we introduce the notion of moving objects.

 \subsection{Spatial Data}

  A GIS dimension
 is considered, as usual in databases, as composed  of  a schema and instances.
 Figure \ref{fig:gisdim} (left) depicts  the schema of a GIS dimension: the bottom
level of each hierarchy, denoted the {\em Algebraic part} of the
dimension, contains the infinite points in a layer, and could be
described by means of linear algebraic equalities and
inequalities~\cite{cdbook}.
 Above this part there is  the  {\em
Geometric part,} that stores the identifiers of the geometric
elements of GIS and is used to solve the geometric part of a query
(i.e. find
 the polylines -implemented as linestrings-  in a river representation).
 Each point in the Algebraic part may correspond to one or more elements
 in the Geometric part. Thus, at the \emph{GIS dimension instance} level we
 will have rollup
 \emph{relations} (denoted $r_{L}^{geom_1 \to geom_2}.$
  These relations map, for example, points in the Algebraic
 part, to geometry identifiers in the Geometric part
   For example,
 $r_{L_{city}}^{point \to Pg}(x,y,pg_1)$ says that point $(x,y)$ corresponds
 to a polygon identified by $pg_1$ in the Geometric part (note that a point may
 correspond to more than one polygon, o to more than one polylines that
 intersect with each other).

  Finally, there is the
 {\em OLAP part} of the dimension. This part contains  the
  conventional  OLAP structures, as defined in \cite{Hurtado99}.
 The levels in the geometric part are
 associated to the OLAP part via  a function,
   denoted $\alpha_{L,D}^{dimLevel \to geom}.$ For instance,
    $\alpha_{L_r,Rivers}^{riverId \to g_r}$  associates information
    about a river in the OLAP part ($riverId$), to the identifier of a
     polyline ($g_r$) in
 a  layer containing rivers ($L_r$) in the Geometric part.

\begin{figure}[t]
\begin{tabular}{l c r}
   {\psfig{figure=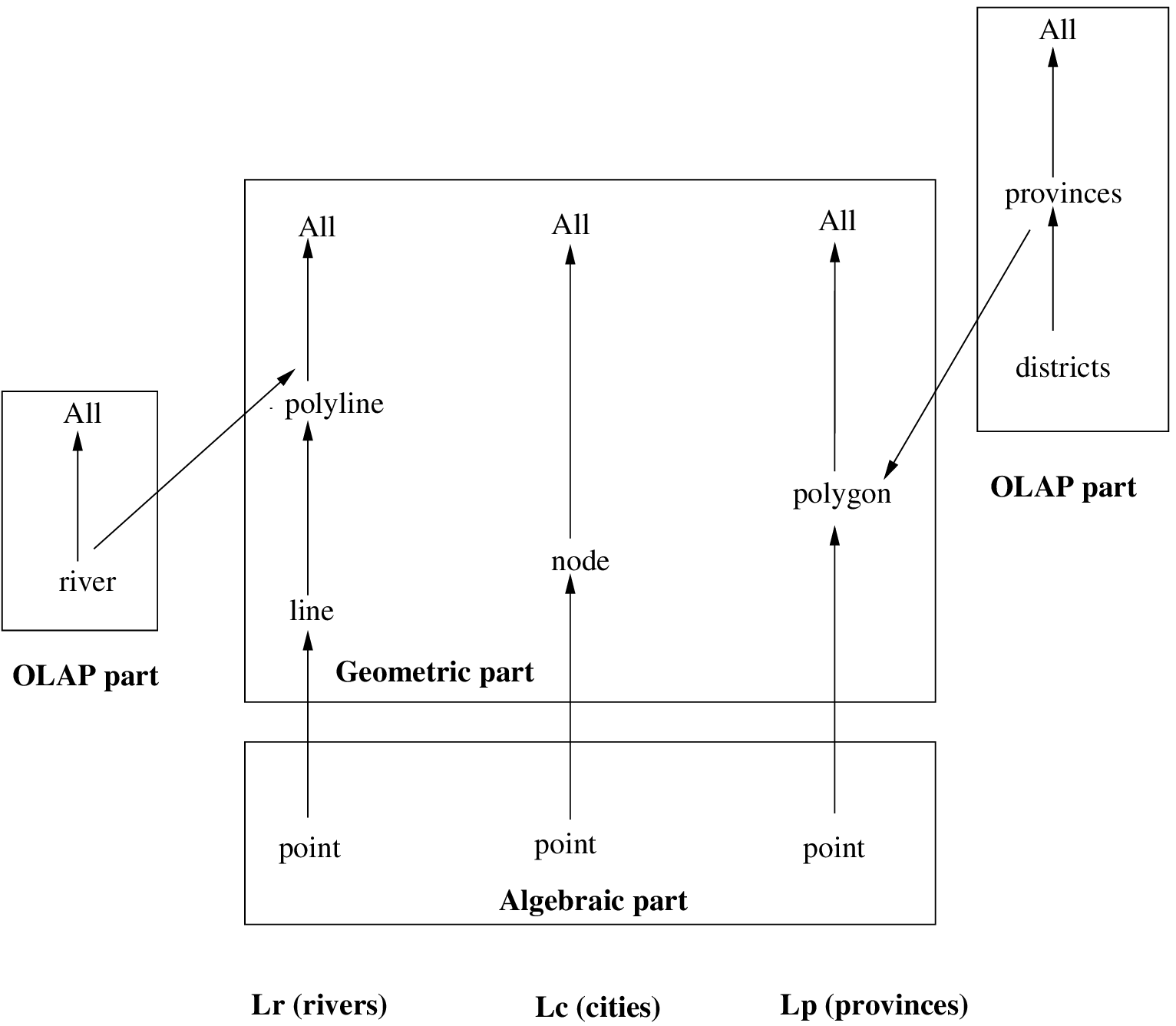,width=2.3in,height=2.0in}}
   &\phantom{josk}&
   {\psfig{figure=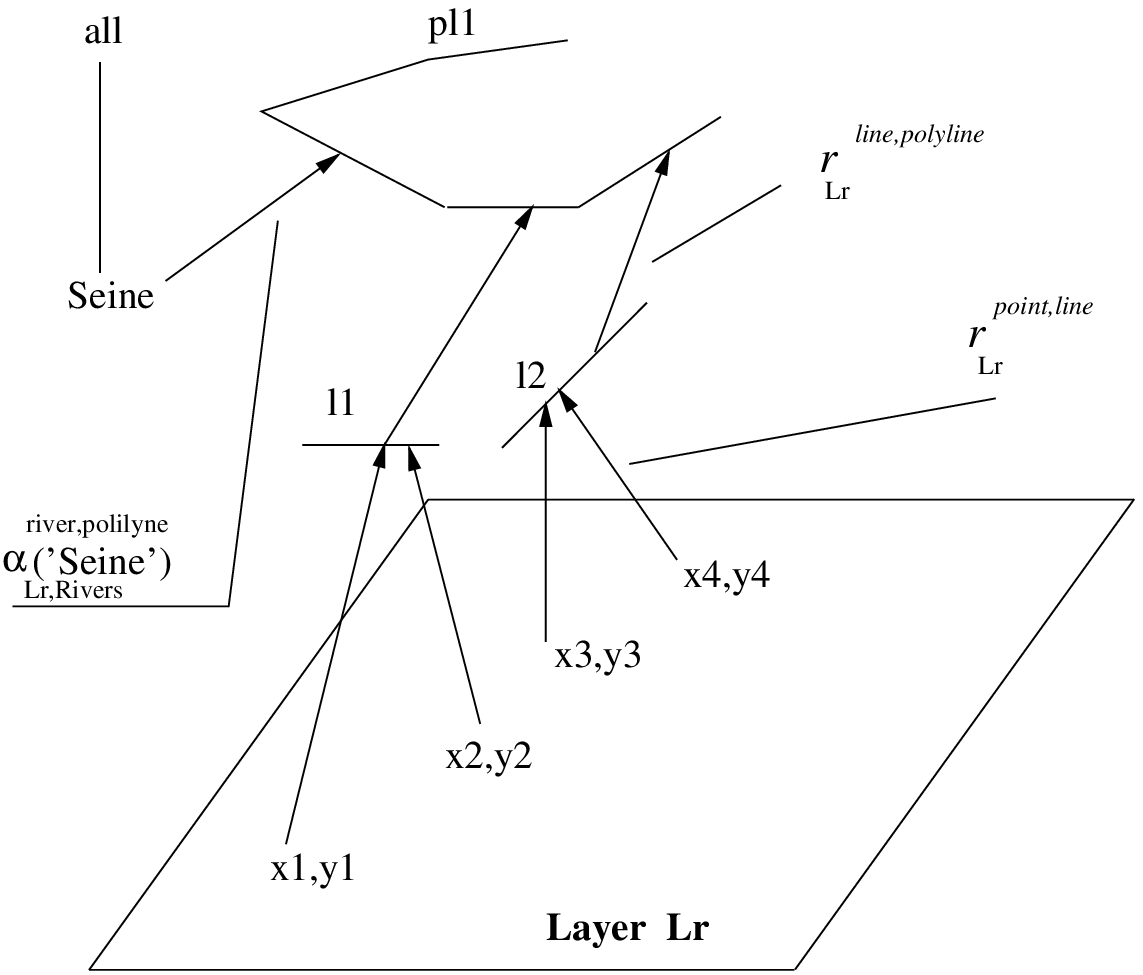,width=2.3in,height=2.0in}}
   \end{tabular}
\caption{A GIS dimension schema (left) and A GIS dimension
instance (right)}
 \label{fig:gisdim}
\end{figure}


\begin{example}\rm Figure \ref{fig:gisdim} (left) shows a  GIS dimension
schema, where we defined three layers, for rivers,  cities, and
 provinces, respectively.   The schema is composed of three
 graphs; the graph for rivers contains edges saying
 that a point $(x,y)$ in the algebraic part relates to a line identifier
   in the
 geometric part, and that in the same portion of the dimension,
 this line aggregates on a  polyline identifier.

In the OLAP part we have information given by two dimensions,
  representing  districts and rivers, associated to the
corresponding graphs, as the figure shows. For  example, a river
identifier at the bottom layer of the {\em Rivers} dimension representing
rivers in the OLAP part, is mapped to the  polyline dimension
level in the geometric part in the graph in the rivers layer $L_r$.

 Figure \ref{fig:gisdim} (right) shows a portion of a GIS dimension
 instance of the rivers layer $L_r$ in the dimension schema of
  the schema in the left of the figure.
  Here, an instance of a GIS dimension in the
 OLAP part is associated to the polyline $pl_1,$ which
 corresponds to the Seine river. For simplicity we only
   show  four different points at the
 $point$ level $\{(x_1,y_1),\ldots,(x_4,y_4)\}.$ There is a
 relation $r_{L_r}^{point,line}$ containing the association of points to the lines
 in the \emph{line} level, and a relation $r_{L_r}^{line,polyline},$
  between the line and polyline levels, in the same layer.
  \qed
\end{example}


Elements in the geometric part  can be associated with {\em
facts}, each fact being quantified by one or more {\em measures},
not necessarily  a numeric value.
 Of course, besides the GIS fact tables,  there may also be classical fact
tables in the OLAP part, defined in terms of the OLAP dimension
schemas. For instance, we could either store the population
associated to a polygon identifier,  or in a data warehouse fact table,
  with schema $(state,
Year, Population).$

 \subsection{Moving Object Data Representation}

Besides the static information representing geometric components
(i.e., the GIS), for representing time in the OLAP part there will
be a Time dimension (actually, there could be more than one Time
dimensions, supporting, for example, different notions of time).
Also, as it is well-known in OLAP, this time dimension can have
different configurations that depend on the application at hand.
  Moving objects are integrated  in the former framework using  a
distinguished fact table denoted {\em Moving  Object Fact  Table}
(MOFT).

First, we say what a trajectory is. In practice, trajectories are available
by a finite sample of $(t_i,x_i,y_i)$ points, obtained by observation.

\begin{definition}[Trajectory]\rm
A \emph{trajectory}  is a list of time-space points
$\langle (t_0, \ab x_0,\ab y_0),\ab (t_1, \ab x_1,\ab
y_1),...,(t_N, \ab x_N,\ab y_N)\rangle$, where $t_i, \ab x_i,\ab
y_i \in\R$  for $i=0,...,N$ and $t_0<t_1<\cdots <t_N$.
We call the interval $[t_0,t_N]$ the \emph{time domain} of the trajectory.
 \qed
 \end{definition}

For the sake of finite representability, we may assume that the
time-space points $(t_i,x_i,y_i)$,
  have rational coordinates.

A moving object fact table (MOFT for short, see the table in the
right hand side of Figure~\ref{fig:paris}  contains a finite
number of identified trajectories. Formally:

\begin{definition}[Moving  Object Fact  Table]\label{def:moft}\rm
Given a  finite set $\mathcal{T}$ of trajectories, a {\em Moving Object Fact Table} (MOFT) for
$\mathcal{T}$ is a relation with schema $<Oid, T, X,
Y>,$ where $Oid$ is the identifier of the moving object, $T$
represents time instants, and $X$ and $Y$ represent the spatial
coordinates of the objects. An instance $\mathcal{M}$ of the above schema contains
 a finite number of tuples of the form $(O_i, t, x, y)$, that
represent the position $(x,y)$ of the object $O_i$ at instant $t,$
for the trajectories in  $\mathcal{T}$.\qed
\end{definition}

\section{A Query Language for Aggregation of Moving Object Data}
\label{sec:FOlang}

 The aggregation  queries
that we  address in this paper are based on a first-order moving
object query language $ {\cal L}_{mo}$ and they are of the
following types:

\begin{itemize}

\item the {\sc Count} operator applied to sets of the form
$\{O_{id}\mid \phi(O_{id})\},$
where moving objects identifiers satisfying some $ {\cal L}_{mo}$-definable property $\phi$ are
collected;

\item the {\sc Count} operator applied to sets of the form
$\{(O_{id},t)\mid \phi(O_{id},t)\},$ where moving objects identifiers
combined with time moments,  satisfying some $ {\cal L}_{mo}$-definable
property $\phi$, are collected (assuming that this set is finite; otherwise the count is undefined);

\item the {\sc Count} operator applied to sets of the form
$\{(O_{id},t,x,y)\mid \phi(O_{id},t,x,y)\},$ where moving objects id's
combined with time and space coordinates, satisfying some $ {\cal L}_{mo}$-definable property $\phi$,
are collected (assuming that this set is finite);

 \item the {\sc Area}  operator applied to sets of the form
 $\{(x,y)\in \R^2\mid\phi(x,y)\},$
which define some  $ {\cal L}_{mo}$-definable  part of  the plane $\R^2$ (assuming that this set is linear and bounded);

\item the {\sc Count}, {\sc Max} and {\sc Min} operators applied to sets
of the form  $\{t\in \R\mid\phi(t)\},$
when the  $ {\cal L}_{mo}$-definable condition $\phi$ defines a finite set of time instants and the
{\sc TimeSpan} operator when $\phi$ defines an infinite, but bounded
 set of time instants (the semantics of {\sc Count}, {\sc Max} and {\sc Min} is clear and
{\sc TimeSpan} returns the difference between the maximal and minimal
moments in the set);

\item the {\sc Max-l}, {\sc Min-l}, {\sc Avg-l} and  {\sc TimeSpan-l} operators
applied to sets of the form  $\{(t_s,t_f)\in
\R^2\mid\phi(t_s,t_f)\},$ which represents an   $ {\cal L}_{mo}$-definable  set of time
intervals. The meaning of these operators is respectively  the maximum,
minimum and average  lengths of the intervals if there is  a finite number of intervals
and the  timespan of the union of these intervals in the last case;

\item the {\sc Area}  operator applied to sets of the form
$\{g_{id}\mid\phi(g_{id})\},$
where identifiers of elements of some geometry (in the geometric
part of our data model), satisfying an  $ {\cal L}_{mo}$-definable $\phi$ are
 collected. The meaning of this operator is the total area covered by the geometric elements corresponding to the identifiers.
\end{itemize}

%

Obviously, the above list is not complete, but is covers the most
interesting and usual cases (see~\cite{Kuijpers07} for an
extensive list of examples of moving object aggregation
queries). For instance, sets like $\{(t,x)\in \R^2\mid\phi(t,x)\}$
do not correspond to any obvious entity we would like to aggregate
over.

To complete the description of our moving-object aggregation language, the query language $ {\cal L}_{mo}$ remains to be defined.
In the  $ {\cal L}_{mo}$-definable sets considered above, we can see that there are variables
of different kinds, like $O_{id}, t, x, y$ and  $g_{id}.$ In fact, $ {\cal L}_{mo}$ is a multi-sorted
first-order logic using variables of
these types to define sets as considered above.
We now define $ {\cal L}_{mo}$ more formally.

\begin{definition}\label{def:FOlang}\rm

The first-order query language $ {\cal L}_{mo}$ has four types of
variables: \emph{real variables}  $x,y,t,\dots $; \emph{name variables}
$O_{id},...$; \emph{geometric identifier variables} $g_{id},...$ and
\emph{dimension level variables} $a,b,c,...,$ (which are also use for
dimension level attributes).

Besides (existential and universal) quantification over all these variables, and the usual logical
connectives $\land,\lor,\lnot...$, we consider the following
functions and relations to build atomic formulas in $ {\cal L}_{mo}$:
\begin{itemize}
 \item
 for every rollup
function in the OLAP part, we have a function symbol $f_{D_k}^{G_i \to G_j}$,
where $G_i$ and $G_j$ are geometries and $D_k$ is a dimension;
\item analogously, for every rollup relation in the GIS
part, we have a relation symbol $r_{L_k}^{G_i \to G_j}$, where $G_i$ and $G_j$ are
geometries and $L_k$ is a layer;
\item for every $\alpha$ relation associating the OLAP and   GIS
parts in some layer $L_i$, we have a relation symbol
$\alpha_{L_k,D_\ell}^{A_i \to G_j}$, where $A_i$ is a OLAP dimension level and $G_j$
is a geometry, $L_k$ is a layer and $D_\ell$ is a dimension;
\item  for every dimension
level \emph{A}, and every attribute $B$ of $A,$ denoted $A.B,$
there is a function $\beta_{D_k}^{A \to B}$ that maps elements of $A$ to elements of $B$
in dimension $D_k$;
\item we have functions, relations and constants that can be applied to the
alpha-numeric data in the OLAP part (e.g., we have the $\in$ relation to say
that an element belongs to a dimension level, we may have $<$ on income values and
the function $concat$ on string values);
\item for every MOFT,  we have a 4-ary relation $\mathcal{M}_i$;
\item we have arithmetic operations $+$ and $\times,$ the
constants $0$ and $1$, and the relation $<$ for real numbers.
\item finally, we  assume the equality relation for all types of variables.
\end{itemize}
If needed, we may also assume other constants, e.g., for
object identifiers.
\qed
\end{definition}

Definition \ref{def:FOlang} describes the syntax of the language
$  {\cal L}_{mo}$. The interpretation of all variables, functions,
relation, and constants is standard, as well as that of the
logical connectives and quantifiers. We don't define the semantics formally but
  illustrate   through an elaborate example.

\begin{example}\rm
\label{ex:FOlang}
 Let us consider the
query \emph{``Give the total number of buses per hour in the
morning in the Paris districts  with a monthly income of less than
\eu 1500,00.''}


We use the MOFT  $\mathcal{M}$  (Figure~\ref{fig:paris}, left),
that contains the moving objects samples.
For clarity, we will denote the geometry polygons by $Pg$, polylines by  $Pl$
and point by $Pt$. We use \emph{distr}
to denote the   level district in the
OLAP part of the dimension
 schema. The GIS layer which contains district information is called  $L_d$.
 As in the above definition,  we assume that the layers to which a function refers
 are implicit by the function's name.
 For instance, in the expression
 $\alpha_{L_d,Distr}^{distr,Pg}(n)=p_g$,
 the district variable  $n$ is mapped to the polygon with variable name $p_g$ that
  is in the layer $L_d$, indicated by the function $\alpha_{L_d,Distr}^{distr,Pg}$
  (here $Distr$ is a dimension in the OLAP part representing districts).
 Thus, the result of the query returning the region with
 the required income is expressed as:
$$\{(x,y)\mid \exists n \exists g_1
(r_{L_d}^{Pt \to Pg}(x,y,g_1)~\land~
\alpha_{L_d,Distr}^{distr,Pg}(n)=g_1~\land~\beta_{Distr}^{distr\to income} (n)  < 1.500)\}.$$

In this expression, $r_{L_d}^{Pt \to Pg}(x,y,g_1)$ relates points to
polygons in the district layer; the
 function $\alpha_{L_d,Distr}^{distr,Pg}(n)=g_1$ maps
 the district identifier $n$ in the OLAP part to the geometry
 identifier $g_1$; and $\beta_{Distr}^{distr\to income} (n) $ maps
 the district identifier $n$ to the value of the income attribute which
 then is compared by an OLAP relation $<$ with a OLAP constant $1.500$.

 The instants
corresponding to the morning hours mentioned in the fact tables
are obtained through the rollup functions in the Time dimension.
We assume in the Time dimension a category denoted {\em
timeOfDay}, rolling up to the dimension category {\em hour} (i.e.,
$timeOfDay~\to~hour$).  The aggregation of the values in the  fact
table corresponding only to morning hours is computed with the
following expression: $\mathcal{M}_{morning}= \{ (Oid,t,x,y)\mid
f_{Time}^{timeOfDay\to hour}(t)=\mbox{``Morning''}~\land~
\mathcal{M}(Oid,t,x,y)\}.$ In this formula $\mbox{``Morning''}$
appears as a constant related to the OLAP part. Finally, the query
we discuss reads:

$$\displaylines{\quad \mbox{\sc Count}\{(Oid,t)\mid \exists x
\exists y \exists g_1 \exists n (n \in distr\land
\mathcal{M}_{morning}(Oid,t,x,y)~\land~\hfill{}\cr\hfill{}
~r_{L_d}^{Pt \to Pg}(x,y,g_1)~\land~
\alpha_{L_d,Distr}^{distr, Pg}(n)=g~\land~\beta_{Distr}^{distr\to income} (n) < 1,500)\}.\quad}$$


If we would change the given aggregation query to ``Give the total
number of buses per hour in the morning within 3 km from a  Paris
district with a monthly income of less than \eu 1500,00.'' then we
would need $+$, $\times$ and $<$ to express the distance
constraint. This would introduce a quadratic polynomial in the
formula to express that some points are less than 3 km apart. This
concludes the example.\qed
\end{example}


\begin{proposition}
Moving object queries expressible in  ${\cal L}_{st}$ are computable. The
proposed aggregation operators are also computable.
\end{proposition}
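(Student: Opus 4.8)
The plan is to exploit the fact that, on any fixed instance, the structure underlying the language ${\cal L}_{mo}$ of Definition~\ref{def:FOlang} decomposes into a \emph{finite} discrete component and a \emph{continuous} real component, and to reduce computability to the decidability of the first-order theory of the reals. First I would observe that every OLAP and GIS ingredient is finite: the dimension levels, the geometric identifiers, the object identifiers, the rollup relations $r_{L_k}^{G_i \to G_j}$ and functions $f_{D_k}^{G_i \to G_j}$, the association relations $\alpha_{L_k,D_\ell}^{A_i \to G_j}$, the attribute functions $\beta_{D_k}^{A \to B}$, and each MOFT $\mathcal{M}_i$ all have finite graphs. Consequently every quantifier ranging over a name variable, a geometric identifier variable, or a dimension level variable can be expanded effectively into a finite disjunction (for $\exists$) or conjunction (for $\forall$) over the elements actually occurring in the instance.

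After this expansion only the real variables $x, y, t, \dots$ remain quantified. The atoms coupling the discrete and the continuous worlds are of two kinds: the MOFT atoms $\mathcal{M}_i(O_{id}, t, x, y)$, which once $O_{id}$ is grounded reduce to a finite disjunction of equalities among the (rational) sample coordinates; and the ``algebraic part'' relations such as $r_{L}^{Pt \to Pg}(x, y, g)$, which for a fixed geometric identifier $g$ describe, by the model of Section~\ref{sec:DataModel}, a set of points in $\R^2$ cut out by linear equalities and inequalities, hence a semi-algebraic (indeed semi-linear) set definable in \fop. Substituting these definitions for the mixed atoms turns every grounded formula into a formula of \fop over the real variables alone. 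By the Tarski--Seidenberg theorem, \fop admits effective quantifier elimination, so each aggregated set $\{O_{id}\mid\phi\}$, $\{(O_{id},t)\mid\phi\}$, $\{(x,y)\mid\phi\}$, $\{t\mid\phi\}$, $\{(t_s,t_f)\mid\phi\}$, $\{g_{id}\mid\phi\}$ admits a quantifier-free semi-algebraic description that can be computed from the instance.

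With the sets in hand I would treat the aggregation operators one at a time. The purely discrete aggregates ({\sc Count} of object identifiers or of geometric identifiers) reduce to enumerating the finite expansions above. For the continuous aggregates I would invoke a cylindrical algebraic decomposition (CAD) of the computed semi-algebraic set: this simultaneously decides whether the set is finite and, when it is, isolates and counts its points, which handles {\sc Count}, {\sc Max} and {\sc Min} of time sets as well as the endpoint computations behind {\sc Max-l}, {\sc Min-l} and {\sc Avg-l}; {\sc TimeSpan} and {\sc TimeSpan-l} reduce to computing the infimum and supremum of a bounded definable set of reals, again read off the decomposition. The {\sc Area} operator on a bounded semi-linear region is computed exactly from its polygonal cell decomposition, and on a general bounded semi-algebraic region the area is obtained from the same CAD (exactly, as a rational, over semi-linear data, and to arbitrary precision when $\times$ genuinely occurs).

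The hard part will not be deciding truth --- that is delivered at once by Tarski's theorem --- but supplying \emph{effective} procedures for the numerical aggregates and discharging the well-definedness provisos attached to the operators. Concretely, the crux is (i) certifying algorithmically the finiteness hypotheses on which {\sc Count}, {\sc Max} and {\sc Min} of time sets rely and the boundedness hypotheses on which {\sc Area} and the {\sc TimeSpan} operators rely, and (ii) computing the area of a bounded semi-algebraic set. Both are precisely what a cylindrical algebraic decomposition delivers, so the argument ultimately rests on packaging CAD as the single constructive engine behind all of the operators, with the semi-linearity of the GIS data guaranteeing exact rational answers on the fragment that avoids $\times$.
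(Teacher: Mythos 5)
Your proposal is correct and follows essentially the same route as the paper's (much shorter) sketch: evaluate the discrete/finite part directly, handle the real arithmetic and real quantification via constraint-database quantifier elimination, and then compute the aggregates from the resulting quantifier-free descriptions, with the stated finiteness/boundedness provisos guaranteeing effectiveness. The only cosmetic differences are that the paper computes {\sc Area} by triangulating the set (which it restricts up front to be semi-linear and bounded, so the exact-versus-approximate issue you raise for general semi-algebraic regions does not arise) and does not spell out the grounding of the finite sorts or the use of CAD, which your write-up makes explicit.
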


\begin{proof} (Sketch)
The semantics of ${\cal L}_{st}$ expressions is straightforward apart from the
subexpressions that involve $+$, $\times$ and $<$ on real numbers and quantification
over real numbers. These subexpressions belong to the formalism of constraint databases
and they can be evaluated by quantifier elimination techniques~\cite{cdbook}.

The restrictions that we imposed on the applicability of the aggregation operators
make sure that they can be effectively  evaluated. In particular, the area of a set
 $\{(x,y)\in \R^2\mid\phi(x,y)\}$ is computable when this set is semi-linear and
 bounded. This area can be obtained by triangulating such linear sets and adding the areas of the triangles.\qed
\end{proof}

\section{Stops and Moves of Trajectories}
\label{sec:trajectories}

In this section, we define what the stops and moves of a
trajectory are. In a GIS scenario, this definition is dependent on the particular
places of interest in a particular
application.
For instance, in a tourist application, places of interest may be hotels, museums and
churches. In a traffic application, places of interest may be road segments, road junctions
and traffic lights.
First, we define the notion of
``places of interest of an application'' (PIA).


\begin{definition}\rm[Places of Interest]
A \emph{place of interest (PoI)} $C$ is a tuple $(R_C,\Delta_C)$, where
$R_C$ is a (topologically closed) polygon, polyline or point in
$\R^2$  and $\Delta_C$ is a strictly positive real number. The set
$R_C$ is called the \emph{geometry} of the PoI $C$ and
$\Delta_C$ is called its \emph{minimum duration}.

The \emph{places of interest  of an application} (PIA) $\mathcal {P}$ is a finite collection
of PoIs with mutually disjoint geometries.\qed
  \end{definition}

 \begin{definition}\rm [Stops and moves  of a trajectory]
Let $T=\langle (t_0, \ab x_0,\ab y_0),\ab (t_1, \ab x_1,\ab
y_1),...,(t_n, \ab x_n,\ab y_n)\rangle$ be a trajectory
and let ${\mathcal
{P}}=\{C_1=(R_{C_1},\Delta_{C_1}),\ab ...,\ab
C_N=(R_{C_N},\Delta_{C_N})\}$ be a PIA.

A \emph{stop of $T$ with respect to $\mathcal {P}$} is a maximal
contiguous subtrajectory $\langle (t_i, \ab x_i,\ab y_i),\ab
(t_{i+1}, \ab x_{i+1},\ab y_{i+1}),...,(t_{i+\ell}, \ab
x_{i+\ell},\ab y_{i+\ell})\rangle$ of $T$ such that for some
$k\in\{1,...,N\}$ the following holds: (a) $(x_{i+j},y_{i+j})\in
R_{C_k}$ for $j=0,1,...,\ell$; (b) $t_{i+\ell}-t_i>\Delta_{C_k}$.


A \emph{move  of $T$ with respect to $\mathcal {P}$} is: (a) a
maximal contiguous subtrajectory of $T$ in between two temporally
consecutive stops of $T$; (b)  maximal contiguous subtrajectory of
$T$ in between the starting point of $T$ and the first stop of
$T$; (c) a maximal contiguous subtrajectory of $T$ in between the
last stop of $T$ and ending  point of $T$; (d) the trajectory $T$
itself, if $T$ has no stops. \qed



\end{definition}

  \begin{figure}[t]
 \centering
 \input{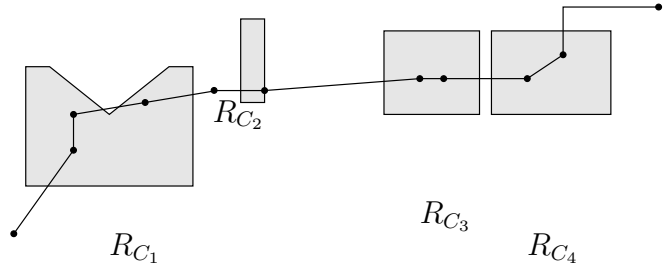}
 \caption{An example of a trajectory with two stops and three moves.}\label{3stops}
 \end{figure}

 Figure~\ref{3stops} illustrates these concepts.
 In this example, there are four places of interest with geometries $R_{C_1}, R_{C_2}, R_{C_3}$ and $R_{C_4}$.
 The trajectory $T$ is depicted here by linearly interpolating between its sample points, to indicate their order.
 Let us imagine that  $T$
 is run through from left to right.
 If the three sample points in $R_{C_1}$
 are temporally far enough apart (longer than $\Delta_{C_1}$), they form a stop.
 Imagine that further on, only the two sample points in
 $R_{C_4}$ are temporally far enough apart to form a stop. Then we have two stops
 in this example and three moves.

 We remark that our definition of stops and moves of a trajectory is arbitrary and can
  be modified in many ways. For example, if we would work with linear interpolation of
  trajectory samples, rather than with samples, we see in Figure~\ref{3stops}, that the
 trajectory briefly leaves $R_{C_1}$ (not in a sample point, but in the interpolation).
 We could incorporate a tolerance for this kind of small exists from PoIs in the definition,
 if we define stops and moves in terms of continuous trajectories, rather than on terms of
 samples.  The following property is straightforward.

  \begin{proposition}\label{prop:pointquery}
  There is an algorithm that returns, for any input $({\cal P},T)$
  with $\cal P$  a PIA and $T$ a trajectory
  $\langle (t_0, \ab x_0,\ab y_0),\ab (t_1, \ab x_1,\ab
y_1),...,(t_n, \ab x_n,\ab y_n)\rangle$, the stops of $T$ with
respect to $\cal P$. This algorithm  works in time $\ord{n\cdot p}$,
 where $p$ is the complexity of answering the point-query~\cite{Rigaux02}.
\qed
\end{proposition}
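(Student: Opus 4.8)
The plan is to reduce the computation of stops to $n+1$ point-queries followed by a single left-to-right scan. First I would invoke the point-query on each of the $n+1$ sample points $(x_j,y_j)$ of $T$. Because the geometries in the PIA $\mathcal{P}$ are mutually disjoint, every sample point lies in at most one geometry $R_{C_k}$, so the point-query returns for each $j$ a well-defined label $\lambda_j\in\{1,\dots,N,\bot\}$, where $\lambda_j=k$ records that $(x_j,y_j)\in R_{C_k}$ and $\lambda_j=\bot$ records that the point lies in no PoI. Producing this labelling costs $(n+1)\cdot p=\ord{n\cdot p}$, since one point-query is issued per sample point.

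Next I would make one pass over the label sequence $\lambda_0,\dots,\lambda_n$ to isolate the maximal contiguous blocks of equal, non-$\bot$ labels. Such a block, say from index $i$ to $i+\ell$ with common label $k$, is by construction a maximal contiguous subtrajectory satisfying condition (a) of the stop definition, and it cannot overlap a block with a different label precisely because the $R_{C_k}$ are disjoint. For each block I would then test condition (b), namely whether $t_{i+\ell}-t_i>\Delta_{C_k}$, and report the block as a stop exactly when this test succeeds. Both the block detection and the duration tests are carried out within the same scan, for a cost of $\ord{n}$.

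The only point that needs a brief justification is that testing (b) on the maximal (a)-block is both sound and complete, and here I would invoke the fact that the timestamps are strictly increasing ($t_0<\cdots<t_n$). Since the duration of a contiguous block is monotone under extension, the maximal (a)-block has the largest duration among all sub-blocks carrying label $k$. Hence, if some subtrajectory of the block satisfies both (a) and (b) then the whole maximal block does, while if the maximal block fails (b) then no sub-block can satisfy (b). This shows that the reported blocks coincide exactly with the stops of the definition, including their maximality with respect to the conjunction of (a) and (b).

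Summing the two phases gives a total running time of $\ord{n\cdot p}+\ord{n}=\ord{n\cdot p}$, as claimed. I do not anticipate a serious obstacle: the heart of the matter is simply that mutual disjointness of the PoI geometries guarantees a unique label per sample point, and monotonicity of duration makes the naive single-scan check on maximal blocks correct, so the only genuine cost is the $n+1$ point-queries.
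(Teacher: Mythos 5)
Your proposal is correct and matches what the paper evidently has in mind: the paper declares the proposition straightforward and gives no proof, and the intended algorithm is precisely your reduction to $n+1$ point-queries followed by a linear scan for maximal equal-label blocks with the duration test. Your explicit justification that maximality with respect to condition (a) alone suffices (via monotonicity of the block duration) is a worthwhile detail the paper leaves implicit.
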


\section{A Stops and Move Fact Table}\label{sec:stopsmoves}

Let the places of  interest  (PoIs) of an application (PIA) be given.
In this section, we describe how we go from MOFTs to application dependent
compressed MOFTS, where $(O_{id},t_i,x_i,y_i)$ tuples are replaced by
$(O_{id},g_{id},t_s,t_f)$ tuples.
In the latter tuples, $O_{id}$ is a moving object
 identifier,  $g_{id}$ is an identifier
of the geometry of a place of interest and $t_s$ and $t_f$ are two
time moments that encode the time interval $[t_s,t_f]$ of a stop.
The idea is to replace the trajectories in a MOFT that are stored
there as  samples,
 by a
\emph{stops MOFT} that represents the  same trajectory
 more concisely by listing its stops and the time intervals spent in the stops.

 In our model, application information about the PoIs is
 stored in the OLAP part
 as OLAP dimensions.
 For example, if  hotels are places of interest, we will need to
  create a dimension Hotels such that its bottom level contains the
  identifier for the hotels and some hierarchy that is
  specific for hotels, e.g., a hotel may belong to the 3-star category.
  Given that these dimensions depend on a particular application,
  we define, at the conceptual level, a \emph{Generic} virtual
  dimension, from which different dimensions can be generated.

To start with, we assume that the places of interest are stored in
moving object OLAP in the following way: the geometries of the
PoIs  are represented in a layer
  in the GIS denoted $L_{PoI}$ (e.g., a layer containing polygons
   that represent hotels or
   a layer containing polylines that represent street segments).
   Data describing the places of interest is stored in the OLAP
   part.

Figure~\ref{fig:schemanew} illustrates how the information about
the places of interest is represented in our model. In this
figure,  in the OLAP part there is a virtual dimension, which we
call the \emph{Generic PoI}, that will be instantiated by as many
types of places of interest as a particular application requires
(in the figure, we show an instantiation for hotels). The bottom
level of this dimension is denoted $PoI_b.$ There is also a
function that maps  the bottom level of the instances of the
Generic PoI (GPoI) to geometries in the geometric part, in the
layer corresponding to the PoIs, denoted $L_{PoI}$. In
Figure~\ref{fig:schemanew}, hotelId is mapped to the geometry
Polygon in the layer $L_{PoI}$. The minimum duration of a PoI is
stored as an attribute of the bottom level of the instances of the
GPoI. For example, an attribute of level hotelId in Figure
~\ref{fig:schemanew}.
   At the  instance level, analogous to what we explained in
  Section~\ref{sec:DataModel}, the function
  $\alpha_{L_{pPoI},D}^{P_i \to G_i}$ maps
 elements in the bottom level ($P_i$) of the instances of the GPoI,
  to the geometric identifiers of the places of interest in the
  geometric part (in Figure \ref{fig:schemanew}, the function is
  defined as $\alpha_{L_{pPoI},Hotels}^{hotelId \to Polygon}$).

\begin{figure}[t]
\centerline{\psfig{figure=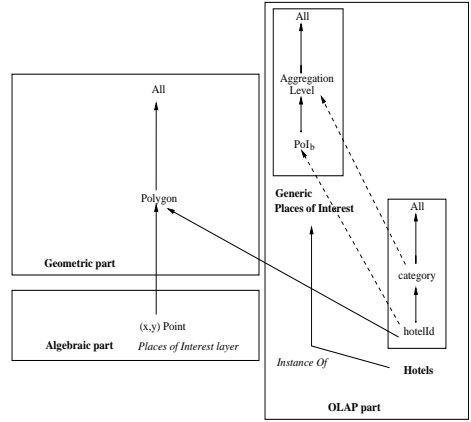,width=2.4in,height=2.2in}}
\caption{Adding Stops to the Data Model} \label{fig:schemanew}.
\end{figure}

\begin{definition}[{\cal SM}-MOFT]\rm \label{def:encmoft}
Let ${\cal
P}=\{C_1=(R_{C_1},\Delta_{C_1}),...,C_N=(R_{C_N},\Delta_{C_N})\}$
be a PIA of PoIs and let $\cal M$ be a MOFT. The \emph{{\cal
SM}-MOFT ${\cal M}^{sm}$ of $\cal M$ with respect to $\cal P$}
consist of the tuples $(O_{id},g_{id},t_s,t_f)$ such that (a)
$O_{id}$ is the identifier of a trajectory in $\cal M$; (b)
$g_{id}$ is the identifier of the geometry of a PoI
$C_k=(R_{C_k},\Delta_{C_k})$ of $\cal P$ such that the trajectory
with identifier $O_{id}$ in $\cal M$ has a stop in this PoI during
the time interval $[t_s,t_f]$. This interval is called the
\emph{stop interval} of this stop. \qed


 \end{definition}

The table in Figure~\ref{fig:parisEnc} (left) gives an example of
a {\cal SM}-MOFT. The following property shows that {\cal
SM}-MOFTs can be
 defined in the moving object query language ${\cal L}_{mo}$.

  \begin{proposition}\label{prop:smmoftinfo}
  There is an ${\cal L}_{mo}$ formula $\phi_{sm}(O_{id},g_{id},t_s,t_f)$ that defines the
 {\cal SM}-MOFT ${\cal M}^{sm}$ of $\cal M$ with respect to $\cal P$. \qed
\end{proposition}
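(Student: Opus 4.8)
The plan is to write $\phi_{sm}$ explicitly as a conjunction of subformulas, one for each clause of the definition of a stop and of the {\cal SM}-MOFT (Definition~\ref{def:encmoft}), translated into the primitives listed in Definition~\ref{def:FOlang}. Three observations supply all the needed vocabulary. First, membership of a sample point in the geometry of a PoI is already a primitive: if $g_{id}$ is the geometric identifier of $R_{C_k}$ in the layer $L_{PoI}$, then ``$(x,y)\in R_{C_k}$'' is exactly $r_{L_{PoI}}^{Pt\to G}(x,y,g_{id})$. Second, the minimum duration $\Delta_{C_k}$ attached to $g_{id}$ is recovered by passing through the OLAP part: the bottom-level PoI element $p$ with $\alpha_{L_{PoI},GPoI}^{PoI_b\to G}(p)=g_{id}$ carries the duration as an attribute, so $\Delta_{C_k}=\beta_{GPoI}^{PoI_b\to minDur}(p)$. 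Third, the samples of the trajectory named $O_{id}$, together with their temporal order, are read directly from $\mathcal{M}(O_{id},t,x,y)$ and the order $<$ on the real sort. The finite PIA $\mathcal{P}$ itself needs no quantification: it is already encoded in the $L_{PoI}$ layer and the GPoI instance.

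Under an existential prefix $\exists p$ binding the PoI element, the routine clauses are then immediate. Requiring that some sample of $O_{id}$ occurs at time $t_s$ and some at time $t_f$, with $t_s\le t_f$, fixes the two endpoints as genuine sample times. Clause (a) of the stop definition, that every intermediate sample lies in $R_{C_k}$, is the guarded universal statement
$$\forall t\,\forall x\,\forall y\,\bigl(\mathcal{M}(O_{id},t,x,y)\land t_s\le t\land t\le t_f\ \rightarrow\ r_{L_{PoI}}^{Pt\to G}(x,y,g_{id})\bigr),$$
and clause (b) is the single arithmetic inequality $t_f-t_s>\beta_{GPoI}^{PoI_b\to minDur}(p)$, which uses only the $+,\times,<$ of ${\cal L}_{mo}$. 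Because a PIA has mutually disjoint geometries, the identifier $g_{id}$ witnessing clause (a) is unique, so these conjuncts are unambiguous.

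The one genuinely delicate clause, and where I expect the real work to lie, is maximality: the subtrajectory on $[t_s,t_f]$ is a stop only if it cannot be prolonged on either side while remaining inside $R_{C_k}$. I would express left-maximality as the disjunction of ``$O_{id}$ has no sample strictly before $t_s$'' and ``the sample of $O_{id}$ immediately before $t_s$ lies outside $g_{id}$''. The key point is that the immediate predecessor is first-order definable over the finite sample set by means of $<$ alone, so the second disjunct reads
$$\exists t\,\exists x\,\exists y\,\bigl(\mathcal{M}(O_{id},t,x,y)\land t<t_s\land \lnot r_{L_{PoI}}^{Pt\to G}(x,y,g_{id})\land \psi(t,t_s)\bigr),$$
where $\psi(t,t_s)$ abbreviates ``no sample of $O_{id}$ lies strictly between $t$ and $t_s$''. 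Right-maximality is the mirror image, with the immediate successor of $t_f$ in place of the immediate predecessor of $t_s$.

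Conjoining the two endpoint conditions, clauses (a) and (b), and the two maximality disjunctions under $\exists p$ yields a well-formed ${\cal L}_{mo}$ formula $\phi_{sm}(O_{id},g_{id},t_s,t_f)$ whose satisfying tuples are exactly those of Definition~\ref{def:encmoft}. The main care needed is precisely in the boundary cases of maximality: when a stop begins at the very first sample of a trajectory or ends at the very last, the predecessor or successor does not exist, and correctness then rests on the ``no earlier/later sample'' disjunct rather than on the membership test. Verifying that these disjunctions together capture maximality in all of these cases is the crux of the argument.
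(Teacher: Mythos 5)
Your construction is correct. Note, however, that the paper itself offers no proof to compare against: immediately after the statement it says ``We omit the proof of this property,'' so your argument supplies what the authors leave out rather than paralleling anything in the text. Your decomposition matches the paper's machinery exactly --- membership in a PoI geometry via the rollup relation $r_{L_{PoI}}^{Pt\to G}$, recovery of the minimum duration through $\alpha$ and the attribute function $\beta$ on the bottom level of the Generic PoI dimension, and arithmetic with $+$ and $<$ for the duration test --- and you correctly identify maximality as the only delicate clause. Two small points worth making explicit: since the geometries of a PIA are mutually disjoint, maximality with respect to clause (a) alone coincides with maximality with respect to (a) and (b) jointly (any extension preserving (a) only lengthens the time span, so (b) is preserved automatically), which justifies testing only the immediate predecessor and successor against $g_{id}$; and the ``immediate predecessor'' is indeed first-order definable over the finite sample set exactly as you write it. With those observations the formula defines precisely the tuples of Definition~\ref{def:encmoft}.
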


 We omit the proof of this property but remark that the use of the formula
 $\phi_{sm}(O_{id},g_{id},t_s,t_f)$ allows us to speak about stops
 and moves of trajectories in ${\cal L}_{mo}$. We can therefore add predicates
 to define stops and moves of trajectories as syntactic sugar to ${\cal L}_{mo}$.

\section{A Query Language for Moving Objects}
\label{sec:language}

In this section we will show how the language $\mathcal{L}_{mo}$
and the model supporting it, can yield sub-languages that can
address many interesting  aggregation queries for moving objets in
a GIS environment. We will sketch a query language based on path
regular expressions, along the lines proposed by \cite{Mouza05}.
However, our language goes far beyond, taking advantage of the
integration between GIS, OLAP and moving objects provided with our
model. Moreover, queries that do not require access to the MOFT
can be evaluated very efficiently, making use of the SM-MOFT.

The idea is based on the construction of a graph representing the
stops and moves of a \emph{single trajectory} as follows: from the
SM-MOFT $\mathcal{M}^{sm}$ we construct a graph $\mathrm{G}$ as
follows. For each different $g_{id}$ in $\mathcal{M}^{sm},$ there
is a node $v$ in $\mathrm{G},$ denoted $v(g_{id}),$ which is
assigned a unique node number $n$. Further, there is an edge $m$
in $\mathrm{G}$ between two nodes
  $v(g_{id_1})$ and $v(g_{id_2}),$
  for every pair of $t_1,t_2$ of consecutive tuples in
  $\mathcal{M}^{sm}$ with the same $O_{id}.$
   Each node $v$ is augmented
   with two functions  and one set: (a) the function $extent(v)$ returns  the
   identifier $p_{id}$ of the PoI in the OLAP part of the
   model (i.e., the  $p_{id}$ such that $\alpha_{L_{PoI},D}^{P_i \to G_i}(p_{id})=g_{id}$);
    (b) the function $label(v)$ returns  the dimension in the
    OLAP part to
   which a given PoI $p_i$ belongs (v.g, Hotels, Museums, and so on);
    (c) a set of Stop Intervals (technically a temporal element) \emph{STE(v)},
     containing the  stop intervals of the object at $v.$
      Note that an object may be
      at a stop more than one time within a trajectory. Further,
      these is an ordered set, given that the intervals are
      disjoint by definition and consecutive by construction.
      We denote the graph constructed in this way an SM-Graph.

\begin{example}
Let us consider the  SM-MOFT table $\mathrm{M}^{sm}$ based on
 the SM-MOFT of Figure \ref{fig:parisEnc} (left). We will use the
 SM-Graph for the trajectory  such that $O_{id}=O2,$ obtained as
 $\sigma_{O_{id}=O_2}(\mathcal{M}^{sm}).$
 Also, we will denote in our examples,  $H_i,$ $M_i,$ and $T_i,$
 the instances of hotels, museums and tourist attractions, respectively.
 Figure  \ref{fig:parisEnc} (right) shows the  SM-Graph. \qed
\end{example}



\begin{figure}[t]
\centerline{
\begin{tabular}{l c r}
   {\scriptsize \begin{tabular}{|c|c|c|c|}
        \hline
       $O_{id}$ & $g_{id}$ & $t_s$& $t_f$ \\\hline
   $O_1$ & $H_1$ & $0$&$10$  \\
    $O_1$ & $L$ & $20$&$30$  \\
 $O_1$ & $H_1$ & $100$&$140$  \\
    $O_2$ & $H_2$ & $0$&$1$  \\
    $O_2$ & $L$ & $25$&$40$  \\
    $O_2$ & $E$ & $50$&$80$  \\
    $O_2$ & $H_2$ & $120$&$140$  \\
    $O_3$ & $H_2$& $0$&$10$  \\
    $O_3$ & $E$& $10$&$40$  \\
$O_3$ & $H_2$& $60$&$140$  \\ \hline
        \end{tabular}}
   &\phantom{joske}&
\raisebox{-2cm}{\psfig{figure=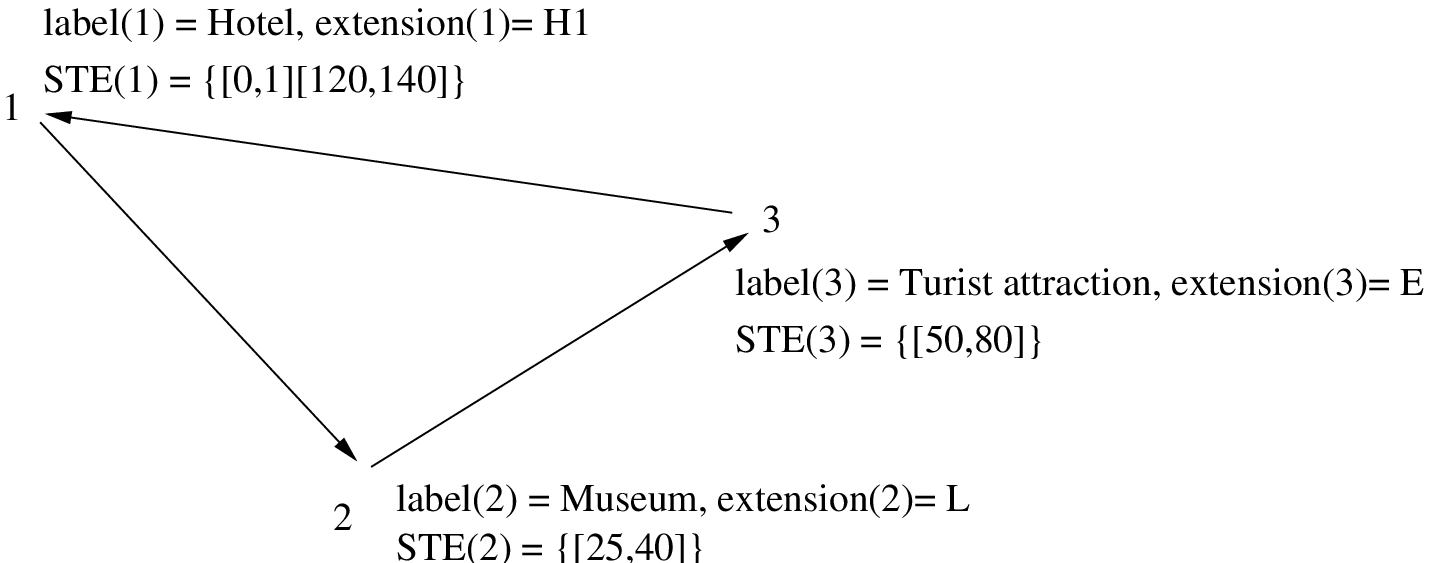,width=2.5in,height=1.3in}}
        \end{tabular}
} \caption{An SM-MOFT for the running example (left); An SM-Graph
(right) for this table.} \label{fig:parisEnc}
\end{figure}

\def\iinter{{\sqcap}}
\def\iunion{{\sqcup}}
\def\overr{{\triangleleft}}
\def\iscl{{\tt iscl}}

%

We will also need some  operators on time intervals.
  We say that an interval $I_1=[t_1,t_2]$ strictly
precedes $I_2=[t_3,t_4],$ denoted $I_1 \lessdot I_2,$ if $t_1
<t_2<t_3 <t_4.$ We also say that $t \vartriangleleft [t_1,t_2] $
returns \emph{True} if $t_1 < t< t_2.$
 Note that all stop intervals $I_1, I_2$ of the same trajectory
 are such that either $I_1 \lessdot I_2$ or $I_2 \lessdot I_1.$



Now we are ready to define a simplified query language for moving
object aggregation, taking advantage of the concept of stops and
moves, but powerful enough to combine (to some extent) the notion
of regular path expressions and first order constraints.
 We assume that MOFTs are
well-defined, thus the graphs are temporally consistent. In
addition,  each edge in an SM-Graph is univocally defined by the
intervals of the stop temporal elements of  the beginning and
ending nodes of the edge. In other words, if there exist two edges
from a node $v_1$ to a node $v_2.$ Each node must have associated
two stop temporal intervals, $STE(v_1)=\{I_1,I_3\}$ and $STE(v_2)
=\{I_2,I_4\},$
 where $I_1 \lessdot I_2 \lessdot I_3 \lessdot I_4$ holds.

 A  first observation at the definition of the
  $\mathcal{SM}$-Graph $\mathcal{G}$
  reveals  that the  graph  can be seen as a
 DFA accepting  regular
 expressions over the labels of the nodes in the graph.
  This becomes clear if, in the graph of Figure  \ref{fig:parisEnc} (right)
  we replace $v$ by $label(v)$. In this case, the nodes
  labeled   $M_i,$ and  $H_i$ will become $M,$ and $H,$
   respectively (shorthand for \emph{Museums} and \emph{Hotels}.
   We call this graph ASM-Graph (the A stands for aggregation).
   As a second observation,  we can
  think on a language such that the DFA accepting this language is
 contained in  the ASM-Graph. Thus, a trajectory
 satisfies a query $Q$ if the DFA of the query is contained
 in $G.$

 \begin{definition}[Regular Expressions Language for Stops and Moves]\rm
  \label{def:axpre}
 An \emph{regular expression on stops and moves,}
 denoted \emph{RESM} is an expression generated
 by the grammar

\[ E \longleftarrow   dim \mid dim[cond]
    \mid  (E)^* \mid E.E \mid \epsilon \mid ? \]

where $dim \in D$ (a set of dimension names in the OLAP part),
$\epsilon$ is the symbol representing the empty expression, ``.''
means concatenation, and $cond$ represents a condition over
$\mathcal{L}_{st}$. The term ``?'' is a wildcard meaning  ``any
sequence of any number of $dim$''.\qed
\end{definition}

The aggregate language  is built on top of RESMs:  for each
trajectory $T$ in an SM-MOFT such that there is a sub-trajectory
of $T$  that matches the RESM, the query returns the $O_{id}$ of
$T.$
Then, we
can apply the aggregate function {\sc Count} to the set returned.


%
%

%
%
%
%
%
%
%

We explain the semantics of RESM-based language using the query:
\emph{``total number of trajectories that went from a ``Hilton''
hotel to a tourist attraction, stopping at a museum.''}, whose
RESM reads: {\sc Count}(H[name=`Hilton'].?.M.?.T).

Note that ``name'' is an attribute of the PoI identifier $p_{id}$
 in the OLAP part (an attribute of the extension of the node).
Then, for each trajectory, and for each instantiation with a value
H, M or T, of a node in the graph,  the variable $name$ is
instantiated with the value $v_i$ corresponding to the attribute
$name$ of $p_{id}$  in the OLAP part such that
$extension(v).name=v_i$ in the dimension $D=Hotel.$ The condition
on the node is then checked.   Finally, if there is a
sub-trajectory matching the RESM, then its $O_{id}$ counts for the
aggregation.

As another example, the query \emph{``total number of trajectories that
went from a Hilton hotel to the Louvre, in the morning.''} \\
{\sc Count}(H[name='Hilton'].?.M[name='Louvre' $\land \exists~t
\vartriangleleft I \land f_{Time}^{timeId\to
TimeOfDay}(t)=``morning''$ ])

The semantics of the first condition is analogous to the semantics
of the query above. The same occurs with
 the condition  over $name$ in M. For the last part of the
 condition over M,   for each trajectory, and
each  instantiation of a node in the graph with a value $H$ or
$M$, $I$ is instantiated
 with values of $STE(v).$

\ignore{

\begin{example}\rm
  The query ``Given an  SM-MOFT $\mathcal{M}^{sm},$
  find all tuples such that the  trajectory  goes from a hotel (H) to a tourist
  attraction (T), stops a the latter, and ends at a hotel
    (maybe stopping  at several intermediate places)'', is written:
    H.T.?.H. \qed
 \end{example}
The following query includes geometric and temporal conditions,
that show how all elements in the model interact. Here, also, the
encoded fact table is not enough, we need to go to the geometry,
which is not kept in the encoded fact table.

 Q3: ``Give the total number of trajectories going from a
tourist attraction to a museum in the 19th district  in Mondays.''

T.*.M$[\exists~t \vartriangleleft I \land
RUP_{timeId}^{DayOfWeek}(t)=``Monday'' \land~\exists~g_{id}
~\exists~x~
\exists~y~\exists~O_{id}~exists~t_1~$\\
$M(O_{id},t_1,x,y)~\land
~\alpha^{CSid,CS}(cs)=g_{id}~\land~f_{point}^{CS}(x,y)=g_{id}
~\exists~pg~\exists~d~\alpha_{L_{dist}}^{dist,Pg}(d)=pg~ \land
pg.number=19~\land~\exists~x'~
\exists~y'\land~f_{L_{dist}}^{point,Pg}(x',y')=pg~\land
f_{L_{dist}}^{point,Pg}(x,y)=pg~\land~\exists~v~cs
\in~extension(v)\land~(RUP_{CSid}^{Type}(c)=M]$

Let us explain this expression.  We need to find a link between
the geometric part (the geographic and the trajectory portions),
and the OLAP part. The links between the latter and the former
two, is provided by the function $\alpha.$ The first one,
$\alpha^{CSid,CS}(cs)=g_{id}$, maps the candidate stop id in the
extension of the current node (the node where the DFA is
``standed''), and  the level CS in the trajectory dimension in the
geographic part. The second one,
$\alpha_{L_{dist}}^{dist,Pg}(d)=pg,$ maps a dimension in the OLAP
part in a layer containing district information, to a geometry
representing the district. The equality
$~f_{L_{dist}}^{point,Pg}(x',y')=pg~\land
f_{L_{dist}}^{point,Pg}(x,y)=pg$ checks that the point of the
trajectory is in the 19th district. }

\begin{proposition} The language defined above
is a subset of $\mathcal{L}_{mo}.$\qed
\end{proposition}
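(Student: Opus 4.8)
The plan is to show that every RESM query can be translated into an equivalent $\mathcal{L}_{mo}$-formula, so that the RESM-based aggregate language is syntactically captured as a fragment of $\mathcal{L}_{mo}$ (modulo the aggregation operators, which are already part of the moving-object aggregation language built on top of $\mathcal{L}_{mo}$). The key observation is that by Proposition~\ref{prop:smmoftinfo} the SM-MOFT $\mathcal{M}^{sm}$ is already $\mathcal{L}_{mo}$-definable via $\phi_{sm}(O_{id},g_{id},t_s,t_f)$, and the SM-Graph of a trajectory is nothing but a re-encoding of the tuples of $\mathcal{M}^{sm}$ sharing a fixed $O_{id}$. Hence everything the RESM language talks about — nodes, labels, extents, stop intervals, edges between temporally consecutive stops — is expressible in terms of $\phi_{sm}$ together with the OLAP functions $\alpha_{L_{PoI},D}^{P_i \to G_i}$, $\beta$, and the $\in$ relation that are already atomic symbols of $\mathcal{L}_{mo}$.

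First I would define the translation by structural induction on the RESM grammar of Definition~\ref{def:axpre}. The central object is a formula asserting the existence of a matching sub-trajectory, i.e.\ a temporally ordered sequence of stops of the object $O_{id}$ whose labels and conditions spell out the regular expression. The atomic case $dim$ is translated by a formula stating that there is a stop tuple $(O_{id},g_{id},t_s,t_f)$ satisfying $\phi_{sm}$ whose geometry $g_{id}$ has its PoI identifier $p_{id}$ (recovered through $\alpha_{L_{PoI},D}^{P_i \to G_i}(p_{id})=g_{id}$) belonging to the dimension $dim$, using the $\in$ relation on dimension levels. The guarded case $dim[cond]$ simply conjoins the $\mathcal{L}_{st}$-condition $cond$, which by hypothesis is already an $\mathcal{L}_{mo}$-formula with free access to the stop interval $I=[t_s,t_f]$ (note that the predicate $t \vartriangleleft I$ unfolds to $t_s < t \land t < t_f$, and attribute references such as $name$ unfold to applications of $\beta$). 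Concatenation $E_1.E_2$ becomes existential quantification over an intermediate time moment with the temporal-precedence constraint $\lessdot$ enforcing that the sub-trajectory matching $E_1$ ends strictly before the one matching $E_2$ begins. The empty expression $\epsilon$ maps to a formula true at a point, and the wildcard $?$ maps to a formula asserting only the existence of some (possibly empty) intervening run of stops, which by the disjointness of stop intervals imposes only an ordering constraint rather than a fixed length.

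The main obstacle is the Kleene star $(E)^*$: a regular expression denotes an unbounded iteration, whereas a first-order formula cannot directly quantify over an arbitrary number of segments. Here I would use the crucial finiteness fact that for any fixed trajectory the stop intervals are totally ordered by $\lessdot$ (as remarked just before Definition~\ref{def:axpre}) and finite in number, since $\mathcal{M}^{sm}$ is a finite relation. Consequently a sub-trajectory matching $(E)^*$ corresponds to a finite, temporally increasing chain of consecutive stops each matching $E$; this can be captured in first order by quantifying over the two endpoints $t_s,t_f$ of the whole matched segment and asserting that \emph{every} stop of $O_{id}$ lying inside $[t_s,t_f]$ satisfies the body $E$ while the stops are consecutive in the SM-Graph sense. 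Because consecutiveness of stops is itself $\phi_{sm}$-definable (no other stop tuple of the same $O_{id}$ lies strictly between two given ones), the iteration collapses to a bounded pattern over the endpoints plus a universally quantified closure condition, which is expressible in $\mathcal{L}_{mo}$.

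Having translated every RESM into an $\mathcal{L}_{mo}$-formula $\psi_E(O_{id})$ that holds exactly when the trajectory with identifier $O_{id}$ has a sub-trajectory matching $E$, the aggregate query $\textsc{Count}(E)$ is then the $\textsc{Count}$ operator applied to the $\mathcal{L}_{mo}$-definable set $\{O_{id}\mid \psi_E(O_{id})\}$, which is precisely the first type of aggregation query listed in Section~\ref{sec:FOlang}. This exhibits the RESM-based language as a subset of (the aggregation language over) $\mathcal{L}_{mo}$, completing the argument. I would close by remarking that the translation is effective, so combined with the earlier computability Proposition the queries are also guaranteed to be evaluable.
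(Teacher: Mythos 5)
Your overall strategy coincides with the paper's: both arguments rest on the observation that, for a fixed $O_{id}$, the stop intervals are pairwise disjoint and totally ordered by $\lessdot$, so the SM-Graph unfolds into a finite, linearly ordered sequence of stops that is definable from $\phi_{sm}$ (Proposition~\ref{prop:smmoftinfo}), and an RESM match is then a first-order condition on that sequence. The paper stops there, in two sentences; you go further and attempt the compositional translation by induction on the grammar, which is the right thing to demand of an actual proof, and your atomic, guarded, concatenation, $\epsilon$ and $?$ cases are all sound.

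However, your treatment of the Kleene star has a genuine gap. The clause ``every stop of $O_{id}$ lying inside $[t_s,t_f]$ satisfies the body $E$'' is only meaningful when $E$ is a single symbol $dim$ or $dim[cond]$; the grammar of Definition~\ref{def:axpre} allows $(E)^*$ for arbitrary $E$, e.g.\ $(H.M)^*$ or $(H.H)^*$, and a single stop cannot ``satisfy'' a concatenation. To handle a compound body you would need to assert the existence of a partition of the run into consecutive blocks each matching $E$ --- exactly the unbounded quantification you set out to eliminate. This is not merely presentational: by McNaughton--Papert, first-order logic over a finite linear order defines exactly the star-free languages, and $(H.H)^*$ (a run with an even number of $H$-stops) is regular but not star-free; the genericity collapse results for constraint databases (see the constraint-database formalism the paper already invokes) show that the real arithmetic of $\mathcal{L}_{mo}$ adds no power for such order-generic queries over the finite relation $\mathcal{M}^{sm}$. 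So the induction cannot be completed for the full grammar: either the star must be restricted to atomic bodies and the wildcard $?$ (which covers every example in the paper, and under which your universal-closure trick is correct), or the proposition must be weakened. The paper's own sketch silently glosses over this same point, so your attempt is faithful to --- and considerably more explicit than --- the published argument, but as written it does not close the one case that actually carries the difficulty.
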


\begin{proof} (Sketch)
The proof is built on the property that, for each trajectory in an
SM-MOFT  the SM-Graph can be unfolded, and transformed into a
sequence of nodes, given that for all nodes $v$ in the graph, all
intervals in $STE(v)$ are disjoint. Thus, this sequence can then
be queried using any FO language with time variables, like
$\mathcal{L}_{mo}$\qed
\end{proof}

\section{Future Work}
\label{sec:Conclusions} Our future work will be focused in the
implementation of the model and  query languages proposed here,
and its integration with the framework introduced
in~\cite{Haesevoets06}. We also believe that the RESM language is
promising for mining trajectory data, specifically in the context
of sequential patterns mining with constraints, and we will work
in this direction.

\end{document}